\title{\LARGE \bf  
Weyl variations and local sufficiency of
linear observers in the mean square optimal coherent quantum filtering problem$^*$}
\author{Igor G. Vladimirov$^{\dagger}$
\thanks{$^*$This work is supported by the Australian Research Council.}
\thanks{$^\dagger$UNSW Canberra, Australia.
{\tt igor.g.vladimirov@gmail.com}.}
}
\DeclareMathAlphabet{\bit}{OML}{cmm}{b}{it}
\newtheorem{lem}{Lemma}
\newtheorem{theorem}{Theorem}
\def\ad{\mathrm{ad}}           
\def\<{\leqslant}           
\def\>{\geqslant}           
\def\Re{\mathrm{Re}}   
\def\Im{\mathrm{Im}}   
\def\mR{\mathbb{R}}    
\def\mC{\mathbb{C}}    
\def\mH{\mathbb{H}}    
\def\cH{\mathcal{H}}    
\def\cQ{\mathcal{Q}}    
\def\Tr{\mathrm{Tr}}   
\def\rT{\mathrm{T}}    
\def\bE{\mathbf{E}}    
\def\bra{\langle}
\def\ket{\rangle}
\def\Bra{\left\langle }
\def\Ket{\right\rangle }
\def\re{\mathrm{e}}    
\def\rd{\mathrm{d}}    
\def\d{\partial}    
\def\x{\times}
\def\ox{\otimes}
\def\cZ{\mathcal{Z}}
\def\b1{\mathbf{1}}
\def\bD{\mathbf{D}}
\def\bH{\mathbf{H}}
\def\bU{\mathbf{U}}
\def\bS{\mathbf{S}}
\def\cF{\mathcal{F}}
\def\cW{\mathcal{W}}
\def\cD{\mathcal{D}}
\def\cC{\mathcal{C}}
\def\cG{\mathcal{G}}
\def\cI{\mathcal{I}}
\def\cP{\mathcal{P}}
\def\cB{\mathcal{B}}
\def\cE{\mathcal{E}}
\def\cA{\mathcal{A}}
\def\bG{\mathbf{G}}
\def\bL{\mathbf{L}}
\def\mS{\mathbb{S}}
\def\mA{\mathbb{A}}
\def\eps{\epsilon}
\def\Ups{\Upsilon}
\def\bOmega{\bit{\Omega}}
\def\bTheta{\bit{\Theta}}
\def\bJ{\mathbf{J}}
\def\bj{\mathbf{j}}
\def\sj{\mathsf{j}}
\def\sW{\mathsf{W}}
\def\sS{\mathsf{S}}
\def\fH{\mathfrak{H}}
\def\fF{\mathfrak{F}}
\def\cX{\mathcal{X}}
\begin{document}
\maketitle
\thispagestyle{empty}
\pagestyle{plain}

\centerline{In loving memory of my mother, Raisa Dmitrievna Vladimirova (10.05.1933 -- 30.05.2015)}\vskip1cm

\begin{abstract}
This paper is concerned with the coherent quantum filtering (CQF) problem, where a quantum observer is cascaded in a measurement-free fashion  with a linear quantum plant so as to minimize a mean square error of estimating the plant variables of interest. Both systems are governed by Markovian Hudson-Parthasarathy quantum stochastic differential equations driven by bosonic fields in vacuum state. These quantum dynamics are specified by the Hamiltonians and system-field coupling operators. We apply a recently proposed transverse Hamiltonian variational method to the development of  first-order necessary conditions of optimality for the CQF problem in a larger class of observers. The latter is obtained by perturbing the Hamiltonian and system-field coupling operators of a linear coherent quantum observer along linear combinations of unitary Weyl operators, whose role here resembles that  of the needle variations in the Pontryagin  minimum principle.
We show that if the observer is a stationary point of the performance functional in the class of linear observers, then it is also a stationary point with respect to the Weyl variations in the larger class of nonlinear observers.
 \end{abstract}

\section{INTRODUCTION}\label{sec:intro}

The dynamics of a wide class of open quantum systems, interacting with the
environment, can be described in the framework of the Hudson-Parthasarathy quantum stochastic calculus \cite{H_1991,HP_1984,P_1992}. This approach employs quantum stochastic differential equations (QSDEs) which are driven by a quantum mechanical analogue of the classical Wiener process \cite{KS_1991}. The quantum Wiener process models a heat reservoir of external fields and acts on a boson Fock space \cite{PS_1972,P_1992}. The drift vector and the dispersion matrix of the QSDEs depend on the system Hamiltonian and the system-field coupling operators. These energy operators specify the evolution of the system as a result of its internal dynamics influenced by the interaction with the environment.
The Hamiltonian and the coupling operators are usually modelled as functions
of the system variables. In particular, such functions can be polynomials or Weyl quantization integrals \cite{F_1989,SVP_2014}, which affects the  complexity of the resulting quantum system.

An important role in the linear quantum control and filtering theory \cite{DP_2010,JNP_2008,P_2010} is played by open quantum harmonic oscillators (OQHOs) \cite{EB_2005,GZ_2004} with quadratic Hamiltonians and linear system-field coupling operators. The linear-quadratic dependence of the energy operators on the system variables, in combination with the canonical commutation relations (CCRs) between the variables,  makes the OQHO dynamics linear (and  Gaussian in the case of vacuum fields and Gaussian initial states \cite{JK_1998,KRP_2010}).  Despite some similarities to the classical linear SDEs,   the coherent (that is, measurement-free) quantum counterparts \cite{NJP_2009, MJ_2012} to the classical  LQG control and filtering problems \cite{AM_1989,KS_1972} for OQHOs  are complicated by the physical realizability (PR) constraints. The latter  are associated with the state-space matrices of the QSDEs for fully quantum controllers or filters and are related, in particular, to the CCR preservation.

We mention one of the existing variational approaches \cite{VP_2013a,VP_2013b}  to the coherent quantum LQG (CQLQG) control and coherent quantum filtering (CQF) problems which develops optimality conditions using the Frechet differentiation of the LQG cost with respect to the state-space matrices. The quantum nature of the underlying problem  enters this approach  only through the PR constraints, with all the other aspects of the method being essentially ``classical''. The latter has certain advantages, such as practical applicability to the numerical optimization algorithms \cite{SVP_2015}. However, this approach is limited to linear controllers and filters, and the resulting optimality   conditions do not provide insights into whether nonlinear quantum controllers or filters can outperform the linear ones for linear quantum plants.

In the present paper, we consider a CQF problem, similar to \cite{MJ_2012,VP_2013b}, where a quantum observer is cascaded in a measurement-free fashion  with a linear quantum plant  so as to minimize a mean square error with which the observer variables approximate linear combinations of plant variables of interest. Both systems are governed by Markovian Hudson-Parthasarathy QSDEs driven by bosonic fields in vacuum state.
We employ a recently proposed fully quantum variational method of \cite{V_2015} based on using a transverse Hamiltonian, an auxiliary time-varying operator which encodes the propagation of perturbations through the unitary system-field evolution. We apply the transverse Hamiltonian approach to the development of  first-order necessary conditions of optimality for the CQF problem in a larger class of observers. The latter is obtained by perturbing the Hamiltonian and system-field coupling operators of a linear coherent quantum observer along linear combinations of the unitary Weyl operators \cite{F_1989}. Similar trigonometric polynomials of quantum variables have recently been used in \cite{SVP_2014} for modelling the uncertainty in system Hamiltonians. In the present paper, the Weyl variations play a different role which resembles that of the needle variations in the proof of the Pontryagin minimum principle \cite{PBGM_1962}. We show that if the observer is a stationary point of the cost functional in the class of linear observers, then it is also a stationary point with respect to the Weyl variations (with the latter leading to nonlinear observers). Therefore, in the mean square optimal CQF problem for linear quantum plants, linear coherent quantum observers are locally sufficient   at least in the sense of the Weyl variations of the energy operators.


The paper is organised as follows. Section~\ref{sec:not} outlines notation used in the paper. Section~\ref{sec:OQHO} specifies the class of open quantum stochastic  plants being considered. Section~\ref{sec:CQF} formulates the mean square optimal CQF problem. Section~\ref{sec:var} employs the transverse Hamiltonian approach in order to study the sensitivity of the performance criterion to perturbations in the energy operators of observers. Section~\ref{sec:linopt} applies these results to liner-quadratic perturbations of the observer energy operators and obtains the first-order necessary conditions of optimality in the class of linear observers.  Section~\ref{sec:Weyl} introduces the Weyl variations of the energy operators and establishes the main result of the paper that the stationarity of the quadratic cost functional with respect to linear-quadratic perturbations of a linear observer implies the  stationarity with respect to the Weyl variations. Section~\ref{sec:conc} provides concluding remarks.

\section{PRINCIPAL NOTATION}\label{sec:not}

In what  follows, $[A,B]:= AB-BA$ denotes the commutator of linear operators $A$ and $B$ on a common space. As a linear superoperator, the  commutator with a fixed operator $A$, is denoted by $\ad_A(\cdot):= [A,\cdot]$.
This extends to the commutator $(n\x m)$-matrix  $
    [X,Y^{\rT}]
    :=
    XY^{\rT} - (YX^{\rT})^{\rT} = ([X_j,Y_k])_{1\< j\< n,1\< k\< m}
$ for a vector $X$ of operators $X_1, \ldots, X_n$ and a vector $Y$ of operators $Y_1, \ldots, Y_m$.  Vectors are organized as columns unless indicated otherwise,  and the transpose $(\cdot)^{\rT}$ acts on matrices of operators as if their entries were scalars.
In application to such matrices, $(\cdot)^{\dagger}:= ((\cdot)^{\#})^{\rT}$ denotes the transpose of the entry-wise operator adjoint $(\cdot)^{\#}$. For complex matrices,  $(\cdot)^{\dagger}$ is the usual complex conjugate transpose  $(\cdot)^*:= (\overline{(\cdot)})^{\rT}$.
The subspaces of real symmetric, real antisymmetric and complex Hermitian matrices of order $n$ are denoted by $\mS_n$, $\mA_n$
 and
$
    \mH_n
    :=
    \mS_n + i \mA_n
$, respectively,  where $i:= \sqrt{-1}$ is the imaginary unit. The real and imaginary parts of a complex matrix are denoted by $\Re(\cdot)$ and $\Im(\cdot)$. These extend to matrices $M$ with operator-valued entries as $\Re M = \frac{1}{2}(M+M^{\#})$ and $\Im M = \frac{1}{2i}(M-M^{\#})$ which consist of self-adjoint operators. Also, $\bS(M):= \frac{1}{2}(M+M^{\rT})$ 
denotes the symmetrizer 
of square matrices. 
Positive (semi-) definiteness of matrices and the corresponding partial ordering  are denoted by ($\succcurlyeq$) $\succ$.  Also, $\mS_n^+$ and $\mH_n^+$ denote the sets of positive semi-definite real symmetric and complex Hermitian matrices of order $n$, respectively.
The tensor product of spaces or operators (in particular, the Kronecker product of matrices) is denoted by $\ox$. The tensor product $A\ox B$ of operators $A$ and $B$ acting on different spaces will sometimes be abbreviated as $AB$.
The identity matrix of order $n$ is denoted by $I_n$, while the identity operator on a space $H$ is denoted by $\cI_H$.
The Frobenius inner product of real or complex matrices  is denoted by
$
    \bra M,N\ket
    :=
    \Tr(M^*N)
$.
Also, $\|v\|_K:= \sqrt{v^{\rT}Kv}$ denotes the Euclidean (semi-)norm of a real vector $v$ associated with a real positive (semi-)definite symmetric matrix $K$.
The expectation $\bE \xi := \Tr(\rho \xi)$  of a quantum variable $\xi$ over a density operator $\rho$ extends entrywise to vectors and matrices of such variables. 
\section{QUANTUM PLANTS BEING CONSIDERED}\label{sec:OQHO}

We consider a quantum plant which is modelled as a quantum stochastic system interacting with $m$ external boson fields. The plant
 has $n$ dynamic variables $X_1(t), \ldots, X_n(t)$ which evolve in time $t\> 0$. The plant variables are self-adjoint operators on a composite plant-field Hilbert space $\cH\ox \cF$, where $\cH$ is the initial complex separable Hilbert space of the plant which provides a domain for $X_1(0), \ldots, X_n(0)$, and $\cF$ is a boson Fock space \cite{P_1992} for the action of quantum Wiener processes $W_1(t), \ldots, W_m(t)$. The latter are self-adjoint operators which model the external boson fields. The energetics of the plant-field interaction is specified by the plant Hamiltonian $H(t)$ and the plant-field coupling operators $L_1(t), \ldots, L_m(t)$ which are self-adjoint operators,  representable as time-invariant functions (for example, polynomials with constant coefficients or Weyl quantization integrals \cite{F_1989})
 of the plant variables $X_1(t), \ldots, X_n(t)$. Therefore, both $H(0)$ and $L_1(0), \ldots, L_m(0)$ act on the initial space $\cH$. Omitting the time arguments,  we assemble the plant and field variables, and the coupling operators into vectors:
\begin{equation}
\label{XWL}
    X:=
    {\begin{bmatrix}
        X_1\\
        \vdots\\
        X_n
    \end{bmatrix}},
    \qquad
    W:=
    {\begin{bmatrix}
        W_1\\
        \vdots\\
        W_m
    \end{bmatrix}},
    \qquad
    L:=
    {\begin{bmatrix}
        L_1\\
        \vdots\\
        L_m
    \end{bmatrix}}.
\end{equation}
For what follows, the plant is assumed to be an OQHO \cite{EB_2005}. More precisely,
the plant variables satisfy the Weyl CCRs
$
\re^{i(u+v)^{\rT} X} = \re^{i u^{\rT}\Theta v} \re^{iu^{\rT} X} \re^{iv^{\rT} X}
$ for all $u,v\in \mR^n$, where
$
    \re^{iu^{\rT} X}
$
is the unitary Weyl operator \cite{F_1989} on $\cH\ox \cF$ which is associated with (and inherits time dependence from) the plant variables. The Heisenberg  infinitesimal form of the Weyl CCRs is described by
\begin{equation}
\label{Theta}
    [X,X^{\rT}]
    =
    2i
    \Theta
\end{equation}
where the CCR matrix $\Theta \in \mA_n$ represents $\Theta \ox \cI_{\cH\ox \cF}$ and remains unchanged.
The Hamiltonian of the OQHO is a quadratic polynomial of the plant variables, and  the plant-field coupling operators in (\ref{XWL}) are linear functions of the variables:
\begin{equation}
\label{RN}
    H = \frac{1}{2}X^{\rT} R X,
    \qquad
    L
    =
    N X,
\end{equation}
where $R\in \mS_n$ is the energy matrix, and $N \in \mR^{m\x n}$ is the coupling matrix. The plant and the external fields form an isolated quantum system whose evolution  is described by a unitary operator $U(t)$ on $\cH\ox \cF$ driven by the fields and their interaction with the plant according to the QSDE \cite{HP_1984,P_1992}
\begin{align}
\nonumber
    \rd U
    & =
    -\Big(i(H_0\rd t + L_0^{\rT} \rd W) + \frac{1}{2}L_0^{\rT}\Omega L_0\rd t\Big)U\\
\label{dUHL}
    & =
    -U\Big(i(H\rd t + L^{\rT} \rd W) + \frac{1}{2}L^{\rT}\Omega L\rd t\Big),
\end{align}
with initial condition $U_0:=\cI_{\cH\ox \cF}$. The subscript $(\cdot)_0$ indicates the initial values of time-varying operators (or vectors and matrices thereof), so that $H_0:= H(0)$, $L_0:= L(0)$ and $U_0:= U(0)$, and the time arguments will often be omitted for brevity.
Due to the continuous tensor product structure of the Fock space \cite{PS_1972}, the future-pointing increments $\rd W$ commute with adapted processes (including $U$) taken at the same (or an earlier) moment of time.
 The matrix
$\Omega:= (\omega_{jk})_{1\<j,k\< m}\in \mH_m^+$ in (\ref{dUHL}) is the Ito matrix of the quantum Wiener process $W$:
\begin{equation}
\label{Omega}
    \rd W\rd W^{\rT} = \Omega \rd t,
    \qquad
    \Omega:= I_m + iJ,
\end{equation}
and the matrix $J\in \mA_m$ specifies the cross-commutations between the entries $W_1, \ldots, W_m$ of $W$:
\begin{equation}
\label{J}
    [\rd W, \rd W^{\rT}]
    =
    2iJ\rd t,
    \quad
    J := \bJ\ox I_{m/2},
    \quad
    \bJ
    :=
    {\begin{bmatrix}
        0& 1\\
        -1 & 0
    \end{bmatrix}},
\end{equation}
where the dimension $m$  is assumed to be even.
  The QSDE (\ref{dUHL}) corresponds to an important  particular case of open quantum dynamics when the scattering matrix is the identity matrix, and there is no photon exchange between the fields, thus eliminating the gauge processes \cite{P_1992} from consideration.
  A plant operator $\sigma_0$ on the initial space $\cH$ (which can be identified with its extension $\sigma\ox \cI_{\cF}$ to the plant-field space $\cH\ox \cF$) evolves to an operator $\sigma(t)$ on $\cH\ox \cF$ at time $t\>0$ according to the flow
\begin{equation}
\label{flow}
    \sigma(t)
    :=
    \sj_t(\sigma_0)
    =
    U(t)^{\dagger}(\sigma_0\ox \cI_{\cF})U(t).
\end{equation}
When it is applied to vectors and matrices of operators,  the flow  $\sj_t$ acts entrywise.  In view of the identity $(\sigma_0\ox \cI_{\cF})U = U\sigma$, which follows from (\ref{flow}) and the unitarity of $U$, the second equality in (\ref{dUHL})  employs the representation of the Hamiltonian and the coupling operators in terms of the flow:
\begin{equation*}
\label{HL}
    H(t) = \sj_t(H_0),
    \qquad
    L(t) = \sj_t(L_0).
\end{equation*}
Note that the flow $\sj_t$ depends on the energy operators $H_0$ and $L_0$ (or the energy and coupling matrices $R$ and $N$ in (\ref{RN}) in the case of OQHOs). More precisely, any perturbation of $H_0$ and $L_0$ (as functions of the fixed set of system variables $X_0$) modifies  the flow.
Now, the quantum adapted  process $\sigma$ in (\ref{flow}) satisfies the following Hudson-Parthasarathy QSDE \cite{HP_1984,P_1992}:
\begin{equation}
\label{dsigma}
    \rd \sigma = \cG(\sigma) \rd t - i[\sigma,L^{\rT}]\rd W,
    \quad
    \cG(\sigma) := i[H,\sigma] + \cD(\sigma).
\end{equation}
Here, $\cD$ is the Gorini-Kossakowski-Sudar\-shan-Lin\-d\-blad (GKSL) decoherence superoperator \cite{GKS_1976,L_1976} which acts on $\sigma$ as
\begin{align}
\nonumber
    \cD(\sigma)
    & :=
    \frac{1}{2}
    \big(
        L^{\rT}\Omega [\sigma,L]
        +
        [L^{\rT},\sigma]\Omega L
    \big)\\
\nonumber
    & =
    -[\sigma, L^{\rT}]\Omega L - \frac{1}{2} [L^{\rT}\Omega L, \sigma]\\
\label{cD}
    & =
    -[\sigma, X^{\rT}]N^{\rT}\Omega NX - \frac{1}{2} [X^{\rT}N^{\rT}\Omega NX, \sigma].
\end{align}
The last two equalities in (\ref{cD}) are convenient for the entrywise evaluation of $\cD$ at vectors of operators.
The superoperator $\cG$ in (\ref{dsigma}) is referred to as the GKSL generator. In application to
the vector $X$ of plant variables, the flow (\ref{flow}) acts entrywise as
\begin{equation*}
\label{X}
    X(t):=\sj_t(X_0) = U(t)^{\dagger}(X_0\ox \cI_{\cF})U(t).
\end{equation*}
 In view of (\ref{Theta}) and (\ref{RN}), the corresponding QSDE (\ref{dsigma}) takes the form
\begin{equation}
\label{dX}
    \rd X
    =
    \cG(X)\rd t -i[X, L^{\rT}]\rd W
    =
    AX\rd t + B\rd W,
\end{equation}
with the $n$-dimensional drift vector $\cG(X)= AX$ and the dispersion $(n\x m)$-matrix $-i[X, L^{\rT}]=-i[X, X^{\rT}]N^{\rT}= B$, where the matrices $A\in \mR^{n\x n}$ and $B\in \mR^{n\x m}$ are given by
\begin{equation}
\label{AB}
    A:= 2\Theta (R +N^{\rT}JN),
    \qquad
    B:= 2\Theta N^{\rT}.
\end{equation}
The interaction of the input field with the plant produces an $m$-dimensional  output field
\begin{equation}
\label{Y}
    Y(t):=
    {\begin{bmatrix}
        Y_1(t)\\
        \vdots\\
        Y_m(t)
    \end{bmatrix}}
    =
    U(t)^{\dagger}(\cI_{\cH}\ox W(t))U(t),
\end{equation}
where the plant-field unitary evolution is applied to the current input field variables (which is closely related to the innovation role of the quantum Wiener process $W$ in the QSDEs). The output field satisfies
the QSDE
\begin{equation}
\label{dY}
    \rd Y
    =
    2JL \rd t + \rd W
    =
    CX \rd t + \rd W,
\end{equation}
where $J$ is the matrix from (\ref{J}), $L$ is the vector of plant-field coupling operators in (\ref{XWL}) and (\ref{RN}), and the matrix $C\in \mR^{m\x n}$ is given by
\begin{equation}
\label{C}
    C := 2JN.
\end{equation}
\section{
COHERENT QUANTUM FILTERING PROBLEM}\label{sec:CQF}

Consider a measurement-free cascade connection of the quantum plant, described in Section~\ref{sec:OQHO}, with another open quantum system. The latter plays the role of a coherent quantum  observer and is driven by $\Pi Y$ (which is part  of the plant output $Y$ in (\ref{Y})) and a quantum Wiener process $\omega$ of even dimension $\mu$ on a boson Fock space $\fF$; see Fig.~\ref{fig:filtering}.
\begin{figure}[htbp]
\centering
\unitlength=0.8mm
\linethickness{0.2pt}
\begin{picture}(50.00,50.00)
    \put(20,20){\framebox(20,10)[cc]{{\small observer}}}
    \put(20,40){\framebox(20,10)[cc]{{\small plant}}}
    \put(20,25){\vector(-1,0){10}}
    \put(50,25){\vector(-1,0){10}}

    \put(30,40){\vector(0,-1){10}}
    \put(9,25){\makebox(0,0)[rc]{{\small$\eta$}}}
    \put(52,25){\makebox(0,0)[lc]{{\small$\omega$}}}
    \put(52,45){\makebox(0,0)[lc]{{\small$W$}}}
   \put(50,45){\vector(-1,0){10}}
    \put(32,35){\makebox(0,0)[lc]{{\small$\Pi Y$}}}
\end{picture}\vskip-15mm
\caption{The cascade connection of a quantum observer with a quantum plant, mediated by the field $\Pi Y$ and affected by the environment through the quantum Wiener processes $W$ and $\omega$. Also shown is the observer output $\eta$.
}
\label{fig:filtering}
\end{figure}
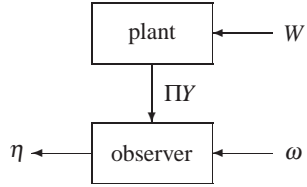
The matrix $\Pi\in \mR^{p\x m}$ is formed from conjugate pairs of rows of a permutation matrix of order $m$, with $p\< m$.
The observer has its own initial Hilbert space $\fH$, dynamic variables  $\xi_1, \ldots, \xi_{\nu}$ with a CCR matrix $\vartheta \in \mA_{\nu}$, and a  $(p+\mu)$-dimensional output field $\eta$:
\begin{equation}
\label{xiomlam}
    \xi :=
    {\begin{bmatrix}
        \xi_1 \\
        \vdots\\
        \xi_{\nu}
    \end{bmatrix}},
    \qquad
    \omega :=
    {\begin{bmatrix}
        \omega_1 \\
        \vdots\\
        \omega_{\mu}
    \end{bmatrix}},
    \qquad
    \eta :=
    {\begin{bmatrix}
        \eta_1 \\
        \vdots\\
        \eta_{p+\mu}
    \end{bmatrix}}.
\end{equation}
We denote the observer Hamiltonian by $\Gamma$, while the vectors of operators of coupling of the observer with the selected plant output $\Pi Y$ and the quantum Wiener process $\omega$ are denoted by
\begin{equation}
\label{PhiPsi}
    \Phi :=
    {\begin{bmatrix}
        \Phi_1 \\
        \vdots\\
        \Phi_p
    \end{bmatrix}},
    \qquad
    \Psi :=
    {\begin{bmatrix}
        \Psi_1 \\
        \vdots\\
        \Psi_{\mu}
    \end{bmatrix}},
\end{equation}
respectively.
The Hamiltonian $\Gamma$ and the coupling operators $\Phi_1, \ldots, \Phi_p$ and $\Psi_1, \ldots, \Psi_{\mu}$ are functions of the dynamic variables $\xi_1, \ldots, \xi_{\nu}$ of the observer and hence, commute with functions of the plant variables, including the plant Hamiltonian $H$ and the plant-field coupling operators in $L$.
The plant and the observer form a composite  open quantum stochastic system, whose vector $\cX$ of dynamic variables satisfies the CCRs
\begin{equation}
\label{cX}
  [\cX, \cX^{\rT}] = 2i\bTheta,
  \qquad
    \bTheta
    :=
    {\begin{bmatrix}
    \Theta & 0\\
    0 & \vartheta
    \end{bmatrix}},
    \qquad
    \cX
    :=
    {\begin{bmatrix}
        X\\
        \xi
    \end{bmatrix}}
\end{equation}
and is driven by a combined quantum Wiener process $\cW$ with the Ito table
\begin{equation}
\label{cW}
    \rd \cW\rd \cW^{\rT}
    =
    \bOmega \rd t,
    \qquad
    \bOmega:=
    {\begin{bmatrix}\Omega & 0\\
    0 & \mho
    \end{bmatrix}},
    \qquad
    \cW
    :=
    {\begin{bmatrix}
        W\\
        \omega
    \end{bmatrix}}.
\end{equation}
Here, $\mho$ is the Ito matrix  of the quantum Wiener process $\omega$  of the observer which is defined similarly to $\Omega$ in (\ref{Omega}) and (\ref{J}):
\begin{equation}
\label{Lambda}
    \rd \omega\rd \omega^{\rT} = \mho \rd t,
    \qquad
    \mho:= I_{\mu} + i\bJ\ox I_{\mu/2}.
\end{equation}
The Hamiltonian $\bH$ of the plant-observer system and the vector $\bL$ of operators of coupling with $\cW$ can be computed by using the quantum feedback  network formalism \cite{GJ_2009} as
\begin{align}
\label{bHbL}
    \bH = H + \Gamma + \Phi^{\rT} \Pi J L,
    \qquad
    \bL= {\begin{bmatrix}L + \Pi^{\rT}\Phi\\ \Psi\end{bmatrix}}.
\end{align}
%
%
%
%
%
%
While the plant dynamics, governed by (\ref{dX}) and (\ref{dY}),  remains unaffected by the observer, the dynamic variables of the latter in (\ref{xiomlam}) are governed by the QSDE
\begin{align}
\nonumber
    \rd \xi
    & =
    \bG(\xi)\rd t
    -
    i[\xi,\bL^{\rT}] \rd \cW\\
\label{dxi}
    & =
    (i[\Gamma,\xi] + \Delta(\xi))\rd t
    -
    i[\xi, \Phi^{\rT}] \Pi \rd Y
    -
    i[\xi, \Psi^{\rT}] \rd \omega.
\end{align}
Here,
\begin{equation}
\label{bG}
    \bG(\zeta):= i[\bH, \zeta] + \bD(\zeta)
\end{equation}
denotes the GKSL generator for the plant-observer system, and
\begin{equation}
\label{bD}
    \bD(\zeta)
    =
    -[\zeta, \bL^{\rT}]\bOmega \bL - \frac{1}{2} [\bL^{\rT}\bOmega \bL, \zeta]
\end{equation}
is  the corresponding decoherence superoperator, similar to (\ref{cD}).
In (\ref{dxi}), use is also made of the partial GKSL decoherence superoperator $\Delta$  which acts on the observer variables as
\begin{align}
\nonumber
    \Delta(\xi)
    = &
    -[\xi, \Phi^{\rT}]\Pi\Omega \Pi^{\rT}\Phi -[\xi, \Psi^{\rT}]\mho \Psi\\
\label{Delta}
    & - \frac{1}{2} [\Phi^{\rT}\Pi \Omega \Pi^{\rT} \Phi + \Psi^{\rT}\mho\Psi, \xi]
\end{align}
in view of (\ref{cX})--(\ref{bHbL}), with $\Pi\Omega \Pi^{\rT} \in \mH_p^+$ being the quantum Ito matrix of $\Pi W$.
Now, similarly to \cite{MJ_2012,VP_2013b}, we formulate a CQF problem  as the minimization of the steady-state mean square discrepancy
\begin{equation}
\label{CQF}
    \cZ
    :=
    \lim_{t\to +\infty} \bE Z(t) \longrightarrow \min,
    \qquad
    Z :=
    E^{\rT}E,
    \quad
    E:= F X - G\xi
\end{equation}
between $q$ linear combinations of the plant variables of interest and observer variables as specified by given matrices $F\in \mR^{q\x n}$ and $G \in \mR^{q\x \nu}$ (with $E$ being interpreted as the estimation error). Here, the quantum expectation $\bE(\cdot)$ is taken over the tensor product $\rho := \varpi \ox \upsilon$ of the initial quantum state $\varpi$ of the plant-observer system and the vacuum state $\upsilon$ in the composite boson Fock space $\cF \ox \fF$ for the external fields. Also, it is assumed that the plant-observer system is ergodic, whereby the limit in (\ref{CQF}) reduces to averaging over the invariant state of the system, provided the plant and observer variables satisfy an appropriate version of the uniform mean square integrability condition. 
The criterion process $Z$ in (\ref{CQF}) is a quadratic function of the vector $\cX$ from (\ref{cX}):
\begin{equation}
\label{ZCQF}
    Z :=
    \cX^{\rT}
    \cC^{\rT}
    \cC
    \cX,
    \qquad
    \cC
    :=
    {\begin{bmatrix}
        F & -G
    \end{bmatrix}}.
\end{equation}
The minimization in (\ref{CQF})
is carried out over the observer Hamiltonian $\Gamma$ and the vector $\Phi$ of the observer-plant coupling operators in (\ref{PhiPsi}), while $\Psi$ and all the dimensions are fixed.
This problem extends \cite{MJ_2012,VP_2013b} in that we do not restrict attention to linear observers even though the plant is an OQHO.

\section{INFINITESIMAL PERTURBATION ANALYSIS}\label{sec:var}

In order to develop first-order necessary conditions of optimality for the CQF problem (\ref{CQF}) in an extended class of observers, we will apply the transverse Hamiltonian variational method of \cite{V_2015} to the infinitesimal perturbation analysis of the performance criterion $\cZ$. To this end, suppose $\Gamma_0$ and $\Phi_0$ depend smoothly  (for example, linearly) on a small scalar parameter $\eps$ and are perturbed in the directions
\begin{equation}
\label{KM0CQF}
    K_0:= \Gamma_0',
    \qquad
    M_0:= \Phi_0',
\end{equation}
consisting of self-adjoint operators on the observer initial space $\fH$, representable as functions of the observer variables, with $(\cdot)':= \d_{\eps}(\cdot)|_{\eps = 0}$. The  corresponding perturbations of the plant-observer  Hamiltonian and the coupling operators in (\ref{bHbL}) are
\begin{equation}
\label{KMCQF}
    \bH_0' = K_0  + M_0^{\rT}\Pi JL_0,
    \qquad
    \bL_0'= {\begin{bmatrix}\Pi^{\rT}M_0\\ 0\end{bmatrix}}.
\end{equation}
The propagation of these  perturbations of the energy operators through the unitary evolution $\bU$ of the   plant-observer-field  system  on the space $\cH\ox \fH \ox \cF \ox \fF$ is encoded by the transverse Hamiltonian \cite{V_2015}. The latter is a time-varying self-adjoint operator defined by $Q:= i\bU^{\dagger}\bU'$, so that $\bU(t)' = -i\bU(t)Q(t)$ for all $t\>0$, with zero initial condition $Q_0 = 0$.  In view of (\ref{cW}), (\ref{KM0CQF}) and (\ref{KMCQF}),  the general QSDE obtained in \cite[Theorem 1]{V_2015} for the transverse Hamiltonian, takes the form:
\begin{align}
\nonumber
    \rd Q
    & =
        \Big(K  + M^{\rT}\Pi JL -\Im \Big(\bL^{\rT}\bOmega {\small\begin{bmatrix}\Pi^{\rT} M\\ 0\end{bmatrix}}\Big)\Big)\rd t
        +
        {\begin{bmatrix}M^{\rT}\Pi & 0\end{bmatrix}}\rd \cW\\
\label{dQCQF}
    & =
        \big(K   -\Im ((2L+\Pi^{\rT}\Phi)^{\rT}\Omega \Pi^{\rT} M)\big)\rd t
        +
        M^{\rT} \Pi\rd W.
\end{align}
Here, $K:= \bj_t(K_0)$ and $M:= \bj_t(M_0)$ are the evolved versions of the initial perturbations from (\ref{KM0CQF}) under the unperturbed flow $\bj_t(\zeta):= \bU(t)^{\dagger}(\cI_{\cH} \ox \zeta \ox \cI_{\cF\ox \fF})\bU(t)$ of the plant-observer-field system (which is applied here to observer operators $\zeta$ on $\fH$). In (\ref{dQCQF}),  use is also made of the relation $\Im(L^{\rT}\Omega \Pi^{\rT}M) 
= -M^{\rT}\Pi JL$ which follows from (\ref{Omega}), the commutativity $[L,M^{\rT}] = 0$ and the antisymmetry of $J$. Since the criterion process $Z$ in (\ref{CQF}) does not depend explicitly on the energy operators, the corresponding formal Gateaux derivative $\cZ'$ of the cost functional can be computed by using  \cite[Theorem 2, Section VII]{V_2015}  as
\begin{equation}
\label{cZ'}
    \cZ'
    :=
    \lim_{t\to+\infty}
        \bE \phi(t),
        \qquad
        \phi:= i[Q,Z].
\end{equation}
Here, $\phi$ is the derivative process \cite{V_2015} associated with $Z$. Its expectation satisfies the integro-differential equation
\begin{equation}
\label{bEphidot}
  (\bE \phi)^{^\centerdot} = i\bE[Q,\bG(Z)] + \bE \chi(Z),
\end{equation}
where $\bG$ is the unperturbed plant-observer GKSL generator given by (\ref{bHbL}), (\ref{bG}), (\ref{bD}), and $\chi$ is an auxiliary  linear superoperator acting on plant-observer system operators $\sigma$ as
\begin{align}
\nonumber
    \chi(\sigma)
     :=&
        i[K   -\Im ((2L+\Pi^{\rT}\Phi)^{\rT}\Omega\Pi^{\rT} M),\, \sigma]\\
\label{chiZ}
        & -2\Re ([\sigma,\, (L+\Pi^{\rT}\Phi)^{\rT}]\Omega \Pi^{\rT}M).
\end{align}
Note that $\chi(\sigma)$ depends linearly on the perturbations $K$ and $M$.
Now, in addition to the plant being an OQHO, suppose  the unperturbed observer is also an OQHO with energy matrix $r\in \mS_{\nu}$  and coupling matrices $N_1\in \mR^{p \x \nu}$ and $N_2\in \mR^{\mu \x \nu}$. The corresponding observer Hamiltonian $\Gamma$ and the coupling operators in (\ref{PhiPsi}) are
\begin{equation}
\label{Gamma_Phi_Psi}
    \Gamma = \frac{1}{2} \xi^{\rT} r\xi,
    \qquad
    \Phi = N_1 \xi,
    \qquad
    \Psi = N_2 \xi.
\end{equation}
In this case, in view of (\ref{cW}) and (\ref{Delta}), the QSDE (\ref{dxi}) becomes linear:
\begin{equation}
\label{dxilin}
    \rd \xi = a \xi \rd t + b_1 \rd Y + b_2 \rd \omega,
\end{equation}
where the matrices $a \in \mR^{\nu\x \nu}$, $b_1 \in \mR^{\nu \x p}$ and $b_2\in \mR^{\nu \x \mu}$ are computed as
\begin{align}
\label{a}
    a& := 2\vartheta (r + N_1^{\rT} \Pi J \Pi^{\rT} N_1 + N_2^{\rT} \Im \mho  N_2 ),\\
\label{bb}
    b_1 & := 2\vartheta N_1^{\rT} \Pi,
    \qquad
    b_2 := 2\vartheta N_2^{\rT}.
\end{align}
Therefore, the plant-observer system is governed by a linear QSDE
\begin{equation}
\label{dcX}
    \rd \cX
    =
    \cA
    \cX \rd t
    +
    \cB
    \rd \cW,
    \quad
    \cA:=
    {\begin{bmatrix}
        A & 0\\
        b_1 C & a
    \end{bmatrix}},
    \quad
    \cB:=
    {\begin{bmatrix}
        B & 0\\
        b_1 & b_2
    \end{bmatrix}}.
\end{equation}
For what follows, both matrices $A$ and $a$ are assumed to be Hurwitz, and hence, so is $\cA$. This implies that the plant-observer system is ergodic and has a unique invariant state which is Gaussian \cite{KRP_2010} with zero mean and quantum covariance matrix $\cP+ i\bTheta \in \mH_{n+\nu}^+$. Here, $\cP \in \mS_{n+\nu}^+$ is the controllability Gramian \cite{AM_1989,KS_1972} of the pair $(\cA, \cB)$ which is  a unique solution of the ALE
\begin{equation}
\label{PALE}
    \cA \cP  + \cP  \cA^{\rT} + \cB \cB^{\rT} = 0.
\end{equation}

\begin{lem}
\label{lem:dash}
Suppose the plant and the unperturbed observer are OQHOs described by (\ref{RN}), (\ref{dX})--(\ref{C}) and (\ref{Gamma_Phi_Psi})--(\ref{bb}),
with Hurwitz matrices $A$ and $a$. Also, let the observer Hamiltonian and the observer-plant coupling operators are perturbed according to (\ref{KM0CQF}). Then the corresponding formal Gateaux derivative of the cost functional in  (\ref{CQF}) can be  computed as
\begin{equation}
\label{cZ'CQF}
    \cZ'
    =
        \bE_* \chi(\sigma),
        \qquad
        \sigma := \cX^{\rT} \cQ \cX.
\end{equation}
Here, $\cQ \in \mS_{n+\nu}^+$ is the observability Gramian of the pair $(\cA, \cC)$ satisfying the ALE
\begin{equation}
\label{QALE}
    \cA^{\rT} \cQ  + \cQ  \cA + \cC^{\rT} \cC = 0,
\end{equation}
where the matrices $\cC$ and $\cA$ are given by (\ref{ZCQF}) and (\ref{dcX}). Also,
$\bE_*(\cdot)$ denotes the quantum expectation over the invariant Gaussian state of the plant-observer system (with $\bE_* \cX = 0$ and $\cP :=\Re \bE_* (\cX\cX^{\rT})$ found from (\ref{PALE})). Furthermore, the superoperator  $\chi$ from (\ref{chiZ}) acts on the operator $\sigma$ in (\ref{cZ'CQF}) as
\begin{align}
\nonumber
    \chi(\sigma)
     =&
        i\Big[K   -\Im
            \Big(
                \cX^{\rT}
                {\begin{bmatrix}
                    2N^{\rT}\\
                    N_1^{\rT}\Pi
                \end{bmatrix}}
                \Omega\Pi^{\rT} M
            \Big),\,
            \sigma\Big]\\
\label{chisigma}
        & +8\Im \Big(\cX^{\rT} \cQ \bit{\Theta}{\begin{bmatrix}
                    N^{\rT}\\
                    N_1^{\rT}\Pi
                \end{bmatrix}}
                \Omega \Pi^{\rT}M\Big).
\end{align}
\end{lem}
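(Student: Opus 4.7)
The plan is to evaluate the formal Gateaux derivative $\cZ'=\lim_{t\to+\infty}\bE\phi(t)$ from (\ref{cZ'}) by applying the derivative-process identity (\ref{bEphidot}) not to the criterion process $Z$ itself but to the auxiliary quadratic observable $\sigma:=\cX^{\rT}\cQ\cX$, where $\cQ$ is the observability Gramian defined by (\ref{QALE}). The motivation is that, under the Hurwitz linear plant-observer dynamics (\ref{dcX}), this $\sigma$ plays the role of a Lyapunov function: a direct application of the quantum Ito formula to $\cX^{\rT}\cQ\cX$ produces a drift of the form $\cX^{\rT}(\cA^{\rT}\cQ+\cQ\cA)\cX$ together with the c-number Ito correction $\Tr(\cB^{\rT}\cQ\cB)\,\cI$. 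The imaginary cross-term proportional to $\Im\bOmega$ vanishes because $\cB^{\rT}\cQ\cB$ is symmetric while $\Im\bOmega$ is antisymmetric. Combined with the ALE (\ref{QALE}) this yields $\bG(\sigma)=-Z+c\cdot\cI$, with the real scalar $c:=\Tr(\cB^{\rT}\cQ\cB)$.

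Writing down the verbatim analogue of (\ref{bEphidot}) for the derivative process $\psi:=i[Q,\sigma]$ associated with this new $\sigma$, and using $[Q,c\cdot\cI]=0$, one finds $(\bE\psi)^{^\centerdot}=i\bE[Q,\bG(\sigma)]+\bE\chi(\sigma)=-i\bE[Q,Z]+\bE\chi(\sigma)=-\bE\phi+\bE\chi(\sigma)$. Passing to the steady-state limit $t\to+\infty$ — justified by the Hurwitz property of $\cA$, the ergodicity of the plant-observer OQHO, and the uniform mean square integrability alluded to after (\ref{CQF}) — kills the left-hand side, and (\ref{cZ'}) identifies $\lim\bE\phi$ with $\cZ'$. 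This establishes $\cZ'=\bE_*\chi(\sigma)$, which is the first assertion (\ref{cZ'CQF}).

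To derive the closed form (\ref{chisigma}), I would substitute the linear expressions $L=NX$, $\Phi=N_1\xi$ from (\ref{RN}) and (\ref{Gamma_Phi_Psi}) into the generic formula (\ref{chiZ}). Both $2L+\Pi^{\rT}\Phi$ and $L+\Pi^{\rT}\Phi$ become linear in the composite vector $\cX$, their transposes being $\cX^{\rT}$ times the block columns $[2N^{\rT};\,N_1^{\rT}\Pi]$ and $[N^{\rT};\,N_1^{\rT}\Pi]$ respectively; the first of these directly delivers the first term of (\ref{chisigma}). For the second term, I would compute $[\sigma,\cX^{\rT}]$ by expanding $\sigma=\sum_{j,k}\cQ_{jk}\cX_j\cX_k$, invoking the Leibniz rule $[\cX_j\cX_k,\cX_l]=\cX_j[\cX_k,\cX_l]+[\cX_j,\cX_l]\cX_k$, and applying the CCR (\ref{cX}) to each elementary commutator. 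The two resulting sums coincide because $\cQ$ is symmetric, so the totals add up to $[\sigma,\cX^{\rT}]=4i\cX^{\rT}\cQ\bTheta$. Post-multiplying by $[N^{\rT};\,N_1^{\rT}\Pi]\,\Omega\,\Pi^{\rT}M$ and using $-2\Re(iz)=2\Im(z)$ then supplies the overall factor $8$ in (\ref{chisigma}).

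The main obstacle is the analytic justification of the steady-state interchange $(\bE\psi)^{^\centerdot}\to 0$ and $\bE\phi\to\cZ'$, since the transverse Hamiltonian $Q(t)$ accumulates unboundedly with $t$ even though the joint plant-observer state converges in distribution; this is where the Hurwitz/ergodicity/uniform-integrability hypotheses really do the work. Everything else — the ALE-driven simplification of $\bG(\sigma)$, the CCR commutator identity $[\sigma,\cX^{\rT}]=4i\cX^{\rT}\cQ\bTheta$, and the bookkeeping of real and imaginary parts — is essentially mechanical once the auxiliary observable $\sigma=\cX^{\rT}\cQ\cX$ has been chosen.
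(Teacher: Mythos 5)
Your proposal is correct. The verification of the closed form (\ref{chisigma}) --- substituting $L=NX$ and $\Phi=N_1\xi$ into (\ref{chiZ}) and using the CCR identity $[\sigma,\cX^{\rT}]=4i\cX^{\rT}\cQ\bTheta$ to produce the factor $8\Im(\cdot)$ --- is essentially identical to the paper's. Where you genuinely diverge is in establishing (\ref{cZ'CQF}). The paper applies the derivative-process identity (\ref{bEphidot}) entrywise to the matrix $\Xi:=\Re(\cX\cX^{\rT})$, for which $\bG(\Xi)=\cA\Xi+\Xi\cA^{\rT}+\cB\cB^{\rT}$, derives the Lyapunov equation $\cA\Ups+\Ups\cA^{\rT}+\bE_*\chi(\Xi)=0$ for $\Ups:=i\lim_{t\to+\infty}\bE[Q,\Xi]$, and then transfers $\cC^{\rT}\cC$ onto the forcing term via the integral representations of the two ALEs, giving $\cZ'=\bra\cC^{\rT}\cC,\Ups\ket=\bra\cQ,\bE_*\chi(\Xi)\ket=\bE_*\chi(\sigma)$. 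You instead apply the same identity to the scalar Lyapunov observable $\sigma=\cX^{\rT}\cQ\cX$ itself, use $\bG(\sigma)=-Z+\Tr(\cB^{\rT}\cQ\cB)$ (your treatment of the Ito correction and of the vanishing $\Im\bOmega$ contribution is right) to turn $i[Q,\bG(\sigma)]$ into $-\phi$, and read off the answer from stationarity of $\bE(i[Q,\sigma])$. The two arguments are dual to each other; yours trades the explicit ALE-duality computation for a single use of the observability ALE inside the generator, which is shorter and makes the Lyapunov role of $\cQ$ transparent.

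The one step where your route is weaker than the paper's is the assertion that $\frac{\rd}{\rd t}\bE\psi\to 0$. A scalar relation $\frac{\rd}{\rd t}\bE\psi=-\bE\phi+\bE\chi(\sigma)$ whose right-hand side converges only forces that limit to be zero if $\bE\psi$ itself remains bounded, and this is not automatic since $Q(t)$ grows without bound; ergodicity of the state alone does not settle it. The paper's matrix formulation supplies exactly the missing ingredient: $i\bE[Q,\Xi]$ obeys the stable linear ODE $\frac{\rd}{\rd t}\big(i\bE[Q,\Xi]\big)=\cA\big(i\bE[Q,\Xi]\big)+\big(i\bE[Q,\Xi]\big)\cA^{\rT}+\bE\chi(\Xi)$ with Hurwitz $\cA$ and convergent forcing, hence converges together with its derivative, and $\bE\psi=\bra\cQ,i\bE[Q,\Xi]\ket$ inherits both properties. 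With that (or an equivalent boundedness argument) added, your proof is sound at the same formal level as the paper's; you correctly identified this limit interchange as the main obstacle.
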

\begin{proof}
Since $Z$ in (\ref{ZCQF}) is a quadratic form of $\cX$, then (\ref{cZ'}) and (\ref{bEphidot}) imply that
\begin{equation}
\label{ZUps}
    \cZ' = \bra \cC^{\rT}\cC, \Ups\ket,
    \qquad
    \Ups := i\lim_{t\to +\infty} \bE[Q,\Xi],
\end{equation}
where $\Ups \in \mS_{n+\nu}$ is the limit mean value of the derivative process $i[Q,\Xi]$ associated with $\Xi:=\Re(\cX\cX^{\rT}) = \cX\cX^{\rT} - i\bTheta$. The GKSL generator $\bG$ of the unperturbed linear plant-observer system acts on $\Xi$ as
$
    \bG(\Xi) = \cA \Xi + \Xi \cA^{\rT} + \cB\cB^{\rT}
$,
and hence, similarly to \cite[Example 3, Section VII]{V_2015},  the matrix $\Ups$ satisfies the ALE
\begin{equation}
\label{UpsALE}
    \cA \Ups + \Ups \cA^{\rT} + \bE_* \chi(\Xi) = 0,
\end{equation}
where the superoperator $\chi$ in (\ref{chiZ}) is applied to $\Xi$ entrywise.
By combining the integral representations of the solutions of the ALEs (\ref{QALE}), (\ref{UpsALE})  and using duality, it follows from (\ref{ZUps}) that
\begin{align*}
    \cZ'
    & =
    \Big\bra
    \cC^{\rT}\cC,\,  \int_0^{+\infty}\re^{t\cA} \bE_* \chi(\Xi) \re^{t\cA^{\rT}}\rd t
    \Big\ket\\
    & =
    \Big\bra
        \int_0^{+\infty}\re^{t\cA^{\rT}} \cC^{\rT}\cC\re^{t\cA}\rd t, \,
        \bE_* \chi(\Xi)
    \Big\ket\\
    & =
    \bra
        \cQ ,\,  \bE_* \chi(\Xi)
    \ket
    =
        \bE_* \chi(\cX^{\rT} \cQ  \cX),
\end{align*}
which establishes (\ref{cZ'CQF}) since $\chi(\Xi) = \chi(\cX\cX^{\rT})$. Indeed, $\sigma$ enters $\chi(\sigma)$ in (\ref{chiZ}) only through the commutators $[\sigma, \cdot]$ and hence, $\chi(\sigma + \theta) = \chi(\sigma)$ for any $\theta \in \mC$. The representation (\ref{chisigma}) is obtained by substituting $L$ from (\ref{RN}), $\Phi$, $\Psi$  from (\ref{Gamma_Phi_Psi}) and $\sigma$ from (\ref{cZ'CQF}) into (\ref{chiZ}), and using the relation $\frac{i}{2}[\sigma, \cX] = 2\bTheta \cQ \cX$.
In particular, the second line of (\ref{chisigma}) is established by
\begin{align*}
-2\Re ([\sigma,\, &(L+\Pi^{\rT}\Phi)^{\rT}]\Omega \Pi^{\rT}M)\\
& =
-2\Re \Big([\sigma, \cX^{\rT}]{\begin{bmatrix}N^{\rT}\\ N_1^{\rT} \Pi \end{bmatrix}}\Omega \Pi^{\rT}M\Big)\\
& =
-2\Re \Big((-4i\bTheta \cQ \cX)^{\rT}{\begin{bmatrix}N^{\rT}\\ N_1^{\rT} \Pi \end{bmatrix}}\Omega \Pi^{\rT}M\Big)\\
& =
8\Im \Big(\cX^{\rT} \cQ \bTheta {\begin{bmatrix}N^{\rT}\\ N_1^{\rT} \Pi \end{bmatrix}}\Omega \Pi^{\rT}M\Big).
\end{align*}
\end{proof}

Although the unperturbed observer in Lemma~\ref{lem:dash} is an OQHO, the perturbations (\ref{KM0CQF})   are not assumed to be linear-quadratic. Therefore, the lemma provides a perturbative tool to develop conditions for such an observer to be a stationary point of the CQF problem (\ref{CQF}) in a wider  class of observers.

\section{OPTIMALITY AMONG LINEAR OBSERVERS}\label{sec:linopt}

We will now apply Lemma~\ref{lem:dash} to
the first-order necessary conditions of optimality among linear observers. 
To this end, associated with the Gramians $\cP$ and $\cQ$ from (\ref{PALE}) and (\ref{QALE}) is the Hankelian
\begin{equation}
\label{cE}
    \cE
    :=
    {\begin{bmatrix} \cE_{11} & \cE_{12}\\ \cE_{21} & \cE_{22}\end{bmatrix}}
    :=
    \cQ\cP
\end{equation}
which, together with $\cP$ and $\cQ$, is split into appropriately dimensioned blocks $(\cdot)_{jk}$ according to the partitioning of $\cX$ in (\ref{cX}), 
with $(\cdot)_{j \bullet}$ the $j$th block-row  and  $(\cdot)_{\bullet k}$  the $k$th block-column.

\begin{theorem}
\label{th:optlin}
Under the assumptions of Lemma~\ref{lem:dash}, the linear observer, described in the lemma, is a stationary point of the CQF problem (\ref{CQF}) in the class of linear observers if and only if
\begin{align}
\label{stat1}
    \vartheta \cE_{22} & \in \mA_{\nu},    \\
\label{stat2}
\Pi(C\cE_{21}^{\rT} + B^{\rT}\cQ_{12}+b_1^{\rT} \cQ_{22}) \vartheta & = \Pi J b_1^{\rT} \cE_{22}.
\end{align}
\end{theorem}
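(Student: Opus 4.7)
The plan is to apply Lemma~\ref{lem:dash} in the special case where the perturbations stay within the class of linear observers. Any such perturbation is parametrized by a symmetric $k\in\mS_\nu$ (infinitesimal change in the observer energy matrix $r$) together with an arbitrary $n_1\in\mR^{p\times\nu}$ (infinitesimal change in the coupling matrix $N_1$), which via (\ref{KM0CQF}) correspond to $K_0 = \tfrac{1}{2}\xi_0^\rT k \xi_0$ and $M_0 = n_1 \xi_0$, and hence $K = \tfrac{1}{2}\xi^\rT k \xi$ and $M = n_1 \xi$ at later times under the unperturbed flow. The derivative $\cZ' = \bE_*\chi(\sigma)$ from (\ref{cZ'CQF}) then becomes a real linear functional of the pair $(k,n_1)$, and stationarity in the linear class amounts to this functional vanishing for every admissible $(k,n_1)$. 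The two conditions in the theorem will be extracted separately from the $k$- and $n_1$-components by Frobenius duality.

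First I would substitute the above $K$ and $M$ into the expression (\ref{chisigma}) for $\chi(\sigma)$. Since $X$ and $\xi$ commute, only the mixed and observer blocks $2X^\rT \cQ_{12}\xi + \xi^\rT \cQ_{22}\xi$ of $\sigma = \cX^\rT \cQ \cX$ interact with $K$; using $[\xi,\xi^\rT] = 2i\vartheta$, the commutator $i[K,\sigma]$ reduces to a linear-in-$\cX$ expression whose coefficients involve $\vartheta k$ together with the blocks of $\cQ$. The $M$-dependent commutator and the $8\Im(\cdot)$ term in (\ref{chisigma}) likewise collapse to quadratic forms in $\cX$ whose matrices are linear in $n_1$. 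Taking $\bE_*$ in the invariant Gaussian state via
\begin{equation*}
    \bE_*(\cX^\rT T\cX) = \Tr(T(\cP+i\bTheta)),
\end{equation*}
and partitioning $\cQ$ and $\cP$ as in (\ref{cE}), each contribution reorganises into a trace expression in the blocks $\cQ_{jk}$ and $\cP_{jk}$, and hence in the Hankelian blocks $\cE_{jk} = (\cQ\cP)_{jk}$.

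Stationarity then forces $\bra k, S_1\ket = 0$ for all $k\in\mS_\nu$, equivalently $\bS(S_1) = 0$. The computation should produce an $S_1$ proportional to $\vartheta \cE_{22}$, so that the resulting symmetric-part condition is exactly the antisymmetry (\ref{stat1}). Since $M$ enters (\ref{chisigma}) only through $\Pi^\rT M = \Pi^\rT n_1 \xi$, the $n_1$-stationarity yields an equation of the form $\Pi \wt S_2 = 0$ for an unconstrained coefficient $\wt S_2$; after rearrangement via the explicit forms of $\cA$, $\cB$ in (\ref{dcX}), together with $C = 2JN$ and $b_1 = 2\vartheta N_1^\rT \Pi$ from (\ref{C}) and (\ref{bb}), it should reproduce (\ref{stat2}).

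The main obstacle will be the careful bookkeeping of operator orderings, real and imaginary parts, and the $\Pi$-projection: the $\Im$ inside the commutator $i[-\Im(\cdot),\sigma]$ in (\ref{chisigma}) has to be combined with the separate $8\Im(\cdot)$ term so that the $n_1$-coefficient organises itself into the combination $C\cE_{21}^\rT + B^\rT \cQ_{12} + b_1^\rT \cQ_{22}$ on one side and $Jb_1^\rT \cE_{22}$ on the other. The ALEs (\ref{PALE}) and (\ref{QALE}) are likely to be needed at isolated points to convert intermediate products of the form $(\cA\cP)_{jk}$ or $(\cA^\rT\cQ)_{jk}$ into $\cB\cB^\rT$- or $\cC^\rT\cC$-terms, collapsing the intermediate expressions into the clean block form stated in the theorem.
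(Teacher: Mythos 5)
Your proposal is correct and follows essentially the same route as the paper: parametrize the linear-quadratic perturbations by $(r',N_1')$, substitute $K=\tfrac{1}{2}\xi^{\rT}r'\xi$ and $M=N_1'\xi$ into $\chi(\sigma)$ from (\ref{chisigma}), average over the invariant Gaussian state, and read off the two stationarity conditions by Frobenius duality, with (\ref{stat1}) arising from the symmetric part of $\vartheta\cE_{22}$ and (\ref{stat2}) from the $N_1'$-coefficient (which, as you anticipate in your ``bookkeeping'' remark, cleans up into the stated form only after the term $\Pi J\Pi^{\rT}N_1\bS(\vartheta\cE_{22})$ is killed by imposing (\ref{stat1}) first). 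The only minor divergence is that the paper does not invoke the ALEs (\ref{PALE}), (\ref{QALE}) in this step -- the block identities $\cE_{jk}=(\cQ\cP)_{jk}$ together with (\ref{AB}), (\ref{C}), (\ref{bb}) suffice.
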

\begin{proof}
Let the matrices $r$ and $N_1$ in (\ref{Gamma_Phi_Psi}) be smooth functions of the parameter $\eps$, and hence, the
corresponding perturbations of the linear observer  in (\ref{KM0CQF})  take the form
\begin{equation}
\label{KMlin}
    K
    =
    \frac{1}{2} \xi^{\rT} r'\xi,
    \qquad
    M = N_1' \xi,
\end{equation}
where the matrices $r'\in \mS_{\nu}$ and $N_1' \in \mR^{p\x \nu}$ can be arbitrary. Substitution of (\ref{KMlin}) into (\ref{chisigma}) leads to
\begin{align}
\nonumber
    \chi(\sigma)
     =&
        i\Big[
        \frac{1}{2} \xi^{\rT} r'\xi   -\Im
            \Big(
                \cX^{\rT}
                {\begin{bmatrix}
                    2N^{\rT}\\
                    N_1^{\rT}\Pi
                \end{bmatrix}}
                \Omega\Pi^{\rT} N_1' \xi
            \Big),\,
            \sigma\Big]\\
\label{chiXX}
        & +8\Im \Big(\cX^{\rT} \cQ \bit{\Theta}{\begin{bmatrix}
                    N^{\rT}\\
                    N_1^{\rT}\Pi
                \end{bmatrix}}
                \Omega \Pi^{\rT}N_1' \xi\Big),
\end{align}
where $\sigma$ is the quantum variable from (\ref{cZ'CQF}). The identity
$\frac{i}{2}[\cX^{\rT} R_1 \cX, \cX^{\rT} R_2 \cX] = 2\cX^{\rT} (R_2\bTheta R_1 - R_1\bTheta R_2)\cX$,  which holds for any $R_1, R_2 \in \mS_{n+\nu}$ due to the CCRs (\ref{cX}), implies that
\begin{align*}
    \frac{i}{2} [\xi^{\rT} r'\xi, \sigma]
    & =
    2
    \cX^{\rT}
    \Big(
        \cQ
        \bTheta
        {\begin{bmatrix}
            0 & 0\\
            0 & r'
        \end{bmatrix}}
        -
        {\begin{bmatrix}
            0 & 0\\
            0 & r'
        \end{bmatrix}}
        \bTheta
        \cQ
    \Big)
    \cX\\
    & =
    2
    \cX^{\rT}
    \Big(
        {\begin{bmatrix}
            0 & \cQ_{\bullet 2}\vartheta r'
        \end{bmatrix}}
        -
        {\begin{bmatrix}
            0\\
            r'\vartheta \cQ_{2 \bullet}
        \end{bmatrix}}
    \Big)
    \cX.
\end{align*}
The averaging of the latter representation over the invariant quantum state leads to
\begin{align*}
\nonumber
    \frac{i}{2}
    \bE_*[\xi^{\rT} r'\xi, \sigma]
    & =
    2
    \Big\bra
        {\begin{bmatrix}
            0 & \cQ_{\bullet 2}\vartheta r'
        \end{bmatrix}}
        -
        {\begin{bmatrix}
            0\\
            r'\vartheta \cQ_{2 \bullet}
        \end{bmatrix}},\,
        \cP
    \Big\ket\\
    & =
    -4
    \Bra
        \vartheta \cQ_{2\bullet} \cP_{\bullet 2},\,
        r'
    \Ket
    =
    -4
    \Bra
        \bS(\vartheta \cE_{22}),\,
        r'
    \Ket,
\end{align*}
where the equality $\cQ_{2\bullet} \cP_{\bullet 2} = \cE_{22}$ follows from (\ref{cE}), and the symmetry of $r'$ is used.
Therefore, in view of (\ref{cZ'CQF}),  the corresponding formal Frechet derivative $\d_r \cZ =     -4\bS(\vartheta \cE_{22})$ of the cost functional vanishes if and only if the matrix $\cE_{22}$ satisfies  (\ref{stat1}). By a similar reasoning, in view of (\ref{chiXX}), the Gateaux derivative of $\cZ$ along $M$  in (\ref{KMlin}) takes the form
\begin{align}
\nonumber
         8\Im &\bE_*\Big(\cX^{\rT} \cQ \bit{\Theta}{\begin{bmatrix}
                    N^{\rT}\\
                    N_1^{\rT}\Pi
                \end{bmatrix}}
                \Omega \Pi^{\rT}N_1' \xi\Big)\\
\nonumber
         & -\Re \bE_*\Big[
                \cX^{\rT}
                {\begin{bmatrix}
                    2N^{\rT}\\
                    N_1^{\rT}\Pi
                \end{bmatrix}}
                \Omega\Pi^{\rT} N_1' \xi
            ,\,
            \sigma\Big]\\
\nonumber
          =& 8\Im
         \Bra
\cQ \bit{\Theta}{\begin{bmatrix}
                    N^{\rT}\\
                    N_1^{\rT}\Pi
                \end{bmatrix}}
                \overline{\Omega} \Pi^{\rT}N_1', \,
                \cP_{\bullet 2}
                +
                {\begin{bmatrix}
                    0\\
                    i\vartheta
                \end{bmatrix}}
         \Ket
         \\
\nonumber
         & +
         4
         \Re
         \Bra
            {\begin{bmatrix}
                0 &
            {\begin{bmatrix}
                    2N^{\rT}\\
                    N_1^{\rT}\Pi
                \end{bmatrix}}
                \overline{\Omega}\Pi^{\rT} N_1'
            \end{bmatrix}},\,
            i(\cE^{\rT} \bTheta-\bTheta \cE)
         \Ket
          \\
\nonumber
    =&
    8
    \Big\bra
        \Pi J {\begin{bmatrix}
                    N &
                    \Pi^{\rT}N_1
                \end{bmatrix}}
                \bTheta \cE_{\bullet 2}
                -
                \Pi {\begin{bmatrix}
                    N &
                    \Pi^{\rT}N_1
                \end{bmatrix}}
                \bTheta \cQ_{\bullet 2}\vartheta\\
\nonumber
    & +
        \Pi JN(\cE_{21}^{\rT} \vartheta - \Theta \cE_{12}) - \Pi J \Pi^{\rT} N_1 \bS(\vartheta \cE_{22}),\,
                N_1'
    \Big\ket\\
\label{chain1}
    = &
    4
    \Bra
        \Pi( (C\cE_{21}^{\rT} + B^{\rT}\cQ_{12}+b_1^{\rT} \cQ_{22}) \vartheta -J b_1^{\rT} \cE_{22}),\,
        N_1'
    \Ket,
\end{align}
where use is also made of (\ref{AB}), (\ref{C}), (\ref{bb}) and (\ref{cE}). Here, $\bS(\vartheta \cE_{22}) = 0$ under the condition (\ref{stat1}), in which case, the formal Frechet derivative
$
   \d _{N_1} \cZ = 4 \Pi( (C\cE_{21}^{\rT} + B^{\rT}\cQ_{12}+b_1^{\rT} \cQ_{22}) \vartheta -J b_1^{\rT} \cE_{22})
$ vanishes if and only if (\ref{stat2}) holds. Therefore, (\ref{stat1}) and (\ref{stat2}) are indeed equivalent to the stationarity in the class of linear observers.  \end{proof}


\section{WEYL VARIATIONS OF OBSERVERS
}\label{sec:Weyl}

Now, consider the following bounded perturbations of the observer Hamiltonian and the observer-plant coupling operators in (\ref{KM0CQF}):
\begin{align}
\label{Kex}
    K & := \Re (\alpha \sW_u) = \frac{1}{2} (\alpha \sW_u + \overline{\alpha}\sW_{-u})=|\alpha|\cos(u^{\rT}\xi + \gamma),\\
\label{Mex}
    M & := \Re (\beta\sW_u) = \frac{1}{2} (\beta \sW_u + \overline{\beta}\sW_{-u})
      = (|\beta_k| \cos(u^{\rT}X + \theta_k))_{1\<k\< p},
\end{align}
with parameters $\alpha\in \mC$,  $\beta:= (\beta_k)_{1\< k\< p} \in \mC^p$, $\gamma:= \arg \alpha$, $\theta_k := \arg \beta_k$ and $u\in \mR^{\nu}$. Here,
$
    \sW_u:= \re^{iu^{\rT} \xi} = \sW_{-u}^{\dagger}
$
is the unitary Weyl operator 
associated with the observer variables. We will refer to the perturbations (\ref{Kex}) and (\ref{Mex}) as the Weyl variations (of the observer energy operators). Similar trigonometric polynomials of quantum variables have recently been used in \cite{SVP_2014} to model uncertainties in system Hamiltonians. However, in what follows, the Weyl variations play a different role which resembles that of the needle variations in the proof of the Pontryagin minimum principle \cite{PBGM_1962}.

\begin{theorem}
\label{th:weyl}
Under the assumptions of Lemma~\ref{lem:dash}, the linear observer satisfies the first-order necessary conditions of optimality in the CQF problem (\ref{CQF}) among linear observers if and only if it is a stationary point with respect to the Weyl variations in (\ref{Kex}) and (\ref{Mex}) for any $\alpha\in \mC$, $\beta\in \mC^p$, $u\in \mR^{\nu}$.
\end{theorem}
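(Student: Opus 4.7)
The plan is to evaluate the Gateaux derivative $\cZ'$ from Lemma~\ref{lem:dash} along the Weyl variations (\ref{Kex})--(\ref{Mex}) and to exhibit $\bE_*\chi(\sigma)$ as an explicit linear functional of the two matrix expressions $\bS(\vartheta\cE_{22})$ and $\Pi((C\cE_{21}^\rT + B^\rT\cQ_{12}+b_1^\rT\cQ_{22})\vartheta - Jb_1^\rT\cE_{22})$ already identified in the proof of Theorem~\ref{th:optlin}. By real-linearity of $\chi(\sigma)$ in $(K,M)$ and the decomposition $\Re(\alpha\sW_u)=\frac{1}{2}(\alpha\sW_u+\overline{\alpha}\sW_{-u})$, it suffices to evaluate $\bE_*\chi(\sigma)$ separately on the elementary building blocks $(K,M)=(\sW_u,0)$ and $(K,M)=(0,e_k\sW_u)$ with $u\in\mR^\nu$ arbitrary and $e_k$ ranging over the standard basis of $\mR^p$, and then to reconstruct (\ref{Kex})--(\ref{Mex}) by superposition in the complex parameters $\alpha$, $\beta$.

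The key technical ingredient is Gaussian Weyl calculus in the invariant state. Writing $\tilde u := [0_n^\rT,u^\rT]^\rT\in\mR^{n+\nu}$, the Weyl CCRs imply the conjugation identity $\sW_u^\dagger\cX\sW_u = \cX - 2\bTheta\tilde u$, while the quantum characteristic function of the Gaussian invariant state yields
\begin{align*}
    \bE_*\sW_u & = \re^{-\frac{1}{2}\tilde u^\rT\cP\tilde u},
    \qquad
    \bE_*\cX\sW_u = (i\cP-\bTheta)\tilde u\cdot \bE_*\sW_u,
\end{align*}
together with an analogous bilinear-in-$\cX$ formula for $\bE_*\cX^\rT A\cX\sW_u$ obtained by differentiating the generating function $v\mapsto \bE_*\sW_u\re^{iv^\rT\cX}$ twice at $v=0$ and using the Weyl composition rule $\sW_u\re^{iv^\rT\cX}=\re^{-i\tilde u^\rT\bTheta v}\re^{i(\tilde u+v)^\rT\cX}$. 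Substituting these formulas into (\ref{chisigma}), using the block partition $\cX=[X;\xi]$, $\tilde u=[0;u]$, $\bTheta=\diag(\Theta,\vartheta)$ and the Hankelian decomposition (\ref{cE}), and exploiting the resulting collapse $\tilde u^\rT\bTheta\cE\tilde u = u^\rT\vartheta\cE_{22}u = u^\rT\bS(\vartheta\cE_{22})u$, I expect the $(K,M)=(\sW_u,0)$ building block to contribute $4u^\rT\bS(\vartheta\cE_{22})u\cdot\bE_*\sW_u$ to $\bE_*\chi(\sigma)$, in analogy with the derivation of $\d_r\cZ$ in Theorem~\ref{th:optlin} for the rank-one linear-quadratic perturbation $r'=uu^\rT$. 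The $(K,M)=(0,e_k\sW_u)$ building block should similarly reduce to the $k$-th entry of $4\bE_*\sW_u\cdot \Pi((C\cE_{21}^\rT + B^\rT\cQ_{12}+b_1^\rT\cQ_{22})\vartheta - Jb_1^\rT\cE_{22})u$, modulo a correction proportional to $\bS(\vartheta\cE_{22})$, matching the chain (\ref{chain1}) in Theorem~\ref{th:optlin} for $N_1'=e_k u^\rT$.

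Both implications of the theorem then follow from this factored representation. If (\ref{stat1})--(\ref{stat2}) hold, every Weyl-variation building block of $\bE_*\chi(\sigma)$ vanishes, while conversely the strict positivity $\bE_*\sW_u>0$ together with the arbitrariness of $\alpha$, $\beta$ and $u\in\mR^\nu$ forces both matrix conditions to be satisfied. The main obstacle I anticipate is the bookkeeping in the second step: the matrices $\Omega=I_m+iJ$, $\bTheta$ and the composite coupling block appearing in (\ref{chisigma}) mix plant and observer coordinates, so one has to carefully track the $n/\nu$ partition of $\tilde u$, $\cP$ and $\bTheta$, repeatedly use $\Im\Omega=J$ and the antisymmetry of $J$ and $\bTheta$, and verify that all cross-block contractions reassemble exactly into the Hankelian entries $\cE_{21}$, $\cE_{22}$ and the Gramian blocks $\cQ_{12}$, $\cQ_{22}$ that enter (\ref{stat1})--(\ref{stat2}).
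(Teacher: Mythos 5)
Your proposal follows essentially the same route as the paper's proof: both evaluate the Gateaux derivative of Lemma~\ref{lem:dash} on the Weyl variations via the Gaussian quasi-characteristic function $\bE_*\re^{i\lambda^{\rT}\cX}=\re^{-\frac{1}{2}\|\lambda\|_{\cP}^2}$ together with the Weyl conjugation and composition identities, and arrive at the same factored expressions $4u^{\rT}\bS(\vartheta\cE_{22})u\,\re^{-\frac{1}{2}\|u\|_{\cP_{22}}^2}\Re\alpha$ and $4\Im\beta^{\rT}\Pi(Jb_1^{\rT}\cE_{22}-(C\cE_{21}^{\rT}+B^{\rT}\cQ_{12}+b_1^{\rT}\cQ_{22})\vartheta)u\,\re^{-\frac{1}{2}\|u\|_{\cP_{22}}^2}$, including the $\bS(\vartheta\cE_{22})$-correction in the $M$-derivative that you flag and that the paper removes by invoking (\ref{stat1}) first. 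The only minor divergence is the converse implication, which the paper obtains by differentiating $\sW_u$ at $u=0$ to recover the linear-quadratic perturbations of Theorem~\ref{th:optlin}, whereas you read it off the explicit factored formula using the positivity of $\bE_*\sW_u$; both arguments are sound.
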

\begin{proof}
The stationarity of the cost functional $\cZ$ with respect to arbitrary Weyl variations of the observer implies the stationarity with respect to the linear-quadratic perturbations (\ref{KMlin}) because  $\d_u\sW_u\big|_{u=0} = i\xi$ and $\d_u^2\sW_u\big|_{u=0} = -\Re(\xi\xi^{\rT})$. In fact, arbitrary polynomial perturbations can be reproduced by the Weyl variations which form a larger class of perturbations. Therefore, it remains to prove that (\ref{stat1}) and (\ref{stat2}) imply the stationarity with respect to arbitrary Weyl variations.
By Lemma~\ref{lem:dash}, the linear response of the cost functional $\cZ$ to the Weyl variation $K$ of the observer Hamiltonian in (\ref{Kex}) takes the form
\begin{equation}
\label{Kvar}
    i
    \bE_*
    [\Re (\alpha \sW_u), \sigma]
    =
    -
    \Im (
        \alpha
        \bE_*
        [\sW_u, \sigma]
        ).
\end{equation}
Let $\sS_u$ denote a linear superoperator, which is parameterized by $u \in \mR^{\nu}$ and carries out a unitary similarity transformation of quantum variables as
$
    \sS_u(\zeta) := \sW_u \zeta\sW_{-u} = \re^{i \ad_{u^{\rT} \xi}}(\zeta) = \sum_{k=0}^{+\infty} \frac{i^k}{k!}\ad_{u^{\rT} \xi}^k(\zeta)
$, where use is made of  a well-known identity for operator exponentials  \cite{M_1998,W_1967}. Its application to $\cX$ from (\ref{cX}) leads to
$
    \sS_u(\cX)
    =
    \cX + {\scriptsize\begin{bmatrix}
    0\\
    2\vartheta u
    \end{bmatrix}}
$
since $[X, \xi^{\rT}] = 0$ (whence $\sS_u(X) = X$) and $[u^{\rT}\xi,\xi] = -[\xi,u^{\rT}\xi] =-[\xi,\xi^{\rT}]u = -2i\vartheta u$  (whereby $\sS_u(\xi) = \xi + 2\vartheta u$). Therefore,  in view of (\ref{cZ'CQF}),  $\sS_u(\sigma) = \sS_u(\cX)^{\rT} \cQ \sS_u(\cX) = \sigma-4u^{\rT}\vartheta (\cQ_{2\bullet}\cX + \cQ_{22} \vartheta u)$, and hence,
\begin{align}
\nonumber
    [\sW_u, \sigma]
    & = (\sS_u(\sigma) - \sigma)\sW_u\\
\nonumber
     & = -4u^{\rT}\vartheta (\cQ_{21}X + \cQ_{22}(\xi+ \vartheta u))
    \sW_u\\
\label{Wsig}
    & =
    4iu^{\rT}\vartheta
    (\cQ_{21}\d_w + \cQ_{22}\d_u)
    \re^{i(w^{\rT} X + u^{\rT}\xi)}\big|_{w=0}.
\end{align}
Here, $\re^{iw^{\rT} X}\sW_u = \re^{i(w^{\rT} X + u^{\rT}\xi)}$ in view of the commutativity $[\re^{iw^{\rT} X}, \sW_u]=0$ between the Weyl operators for all $w\in \mR^n$, $u\in \mR^{\nu}$, and hence,
\begin{equation}
\label{XW}
    X\sW_u = -i\d_w \re^{iw^{\rT} X}\big|_{w=0} \sW_u
    =
    -i\d_w \re^{i(w^{\rT} X + u^{\rT}\xi)}\big|_{w=0}.
\end{equation}
Also, the relation $(\xi + \vartheta u) \sW_u = -i\d_u\sW_u$  follows from
\begin{align}
\nonumber
    \xi\sW_u
    & =
    -i \d_v \sW_v \big|_{v=0} \sW_u\\
\label{xiW}
    & =
    -i \d_v(    \re^{i u^{\rT}\vartheta v}  \sW_{u+v})\big|_{v=0}
     = -\vartheta u \sW_u - i\d_u\sW_u
\end{align}
due to the Weyl CCRs
$
\sW_v \sW_u
=
\re^{i u^{\rT}\vartheta v}  \sW_{u+v}
$ which hold for all $u,v\in \mR^{\nu}$. By averaging (\ref{Wsig}) over the invariant Gaussian quantum state, whose
quasi-characteristic function \cite{CH_1971} is
$    \bE_*\re^{i\lambda^{\rT} \cX}
    =
    \re^{-\frac{1}{2}\|\lambda\|_{\cP}^2}
$ for any $\lambda:= {\scriptsize\begin{bmatrix}w\\ u\end{bmatrix}}\in \mR^{n+\nu}$, it follows that
\begin{align}
\nonumber
    \bE_*
    [\sW_u, \sigma]
    & =
    4iu^{\rT}\vartheta
    \cQ_{2\bullet}
    \d_{\lambda}
    \re^{-\frac{1}{2}\|\lambda\|_{\cP}^2}
    \big|_{w=0}\\
\nonumber
    & =
    -4iu^{\rT}\vartheta
    \cQ_{2\bullet}
    \cP
    {\begin{bmatrix}0\\ u\end{bmatrix}}
    \re^{-\frac{1}{2}\|u\|_{\cP_{22}}^2}\\
\label{Kvargauss}
    & =
    -4iu^{\rT}
    \bS(\vartheta \cE_{22})
    u
    \re^{-\frac{1}{2}\|u\|_{\cP_{22}}^2},
\end{align}
where use is made of the Hankelian from (\ref{cE}).  By substituting (\ref{Kvargauss}) into (\ref{Kvar}), the Gateaux derivative of $\cZ$ along $K$ in (\ref{Kex}) takes the form
$
    i
    \bE_*
    [K, \sigma]
    =
    4u^{\rT}
    \bS(\vartheta \cE_{22})
    u
    \re^{-\frac{1}{2}\|u\|_{\cP_{22}}^2}
    \Re \alpha
$.
Therefore, the fulfillment of (\ref{stat1}) makes
this derivative vanish for any $\alpha\in \mC$ and $u \in \mR^{\nu}$. Similar arguments (see also (\ref{chain1})) lead to the following Gateaux derivative of $\cZ$ along $M$ in (\ref{Mex}):
\begin{align}
\nonumber
         8&\Im \bE_*\Big(\cX^{\rT} \cQ \bit{\Theta}{\begin{bmatrix}
                    N^{\rT}\\
                    N_1^{\rT}\Pi
                \end{bmatrix}}
                \Omega \Pi^{\rT}\Re (\beta\sW_u)\Big)\\
\nonumber
         & -\Re \bE_*\Big[
                \cX^{\rT}
                {\begin{bmatrix}
                    2N^{\rT}\\
                    N_1^{\rT}\Pi
                \end{bmatrix}}
                \Omega\Pi^{\rT} \Re (\beta\sW_u)
            ,\,
            \sigma\Big]\\
\nonumber
          =& 8\Im
         \Bra
\cQ \bit{\Theta}{\begin{bmatrix}
                    N^{\rT}\\
                    N_1^{\rT}\Pi
                \end{bmatrix}}
                \overline{\Omega} \Pi^{\rT}, \,
                \bE_*(\cX \Re (\beta^{\rT}\sW_u))
         \Ket
         \\
\nonumber
         & +
         \Re
         \Bra
            {\begin{bmatrix}
                    2N^{\rT}\\
                    N_1^{\rT}\Pi
                \end{bmatrix}}
                \overline{\Omega}\Pi^{\rT}, \,
            \bE_*[\sigma, \cX \Re (\beta^{\rT}\sW_u)]
         \Ket
          \\
\nonumber
          =& -8
          \re^{-\frac{1}{2}\|u\|_{\cP_{22}}^2}\Im
         \Bra
\cQ \bit{\Theta}{\begin{bmatrix}
                    N^{\rT}\\
                    N_1^{\rT}\Pi
                \end{bmatrix}}
                \overline{\Omega} \Pi^{\rT},
                \Big(\cP_{\bullet 2}
                +
                {\begin{bmatrix}
                    0\\
                    i\vartheta
                \end{bmatrix}}\Big)u\Im \beta^{\rT}
         \Ket
         \\
\nonumber
         & +
         4 \re^{-\frac{1}{2}\|u\|_{\cP_{22}}^2}
         \Re
         \Bra
            {\begin{bmatrix}
                    2N^{\rT}\\
                    N_1^{\rT}\Pi
                \end{bmatrix}}
                \overline{\Omega}\Pi^{\rT}, \,
            i(\Theta \cE_{12}-\cE_{21}^{\rT} \vartheta)u\Im \beta^{\rT}
         \Ket
          \\
\label{chain2}
    = &
    4
          \Im \beta^{\rT}
        \Pi(J b_1^{\rT} \cE_{22}- (C\cE_{21}^{\rT} + B^{\rT}\cQ_{12}+b_1^{\rT} \cQ_{22}) \vartheta)u
        \re^{-\frac{1}{2}\|u\|_{\cP_{22}}^2}.
\end{align}
Here, $\bE_*[\sigma, \sW_u]=0$ follows from (\ref{Kvargauss}) under the condition (\ref{stat1}).
Also, in view of (\ref{Wsig})--(\ref{xiW}), use is made of the relations
\begin{align*}
    \bE_*(\cX \sW_u)
    & =
    -i\d_{\lambda} \bE_* \re^{i\lambda^{\rT}\cX}\big|_{w=0}
    -{\begin{bmatrix}0\\ \vartheta u\end{bmatrix}}\bE_* \sW_u\\
    & =
    \re^{-\frac{1}{2}\|u\|_{\cP_{22}}^2}
    \Big(i\cP_{\bullet 2}  -{\begin{bmatrix}0\\ \vartheta \end{bmatrix}}\Big)u,\\
    \bE_*[\sigma, \cX \sW_u]
    & =
    -i\d_{\lambda} \bE_* [\sigma, \re^{i\lambda^{\rT}\cX}]\big|_{w=0}
    -{\begin{bmatrix}0\\ \vartheta u\end{bmatrix}}\bE_* [\sigma, \sW_u]\\
    & =
    -4\d_{\lambda} (\lambda^{\rT}\bTheta \cQ \d_{\lambda}\bE_* \re^{i\lambda^{\rT}\cX})\big|_{w=0}\\
    & =
    4\d_{\lambda} \big(\lambda^{\rT}\bS(\bTheta \cE)\lambda \re^{-\frac{1}{2}\|\lambda\|_{\cP}^2}\big)\big|_{w=0}\\
    & =
    4 \re^{-\frac{1}{2}\|u\|_{\cP_{22}}^2}
    (\Theta \cE_{12}-\cE_{21}^{\rT} \vartheta)u.
\end{align*}
The fulfillment of (\ref{stat1}) and (\ref{stat2}) makes the Gateaux derivative (\ref{chain2}) of $\cZ$ along the Weyl variation $M$ in (\ref{Mex}) also vanish for any $\beta\in \mC^p$ and  $u \in \mR^{\nu}$. Therefore, (\ref{stat1}) and (\ref{stat2}) indeed imply the stationarity with respect to the arbitrary Weyl variations (\ref{Kex}) and (\ref{Mex}).
\end{proof}

Note that the proof of Theorem~\ref{th:weyl} (which employs the Gaussian averaging of quasi-polynomials of system variables, such as $\cX \re^{i\lambda^{\rT}\cX}$) is closely related to the integro-differential identities  \cite{V_2014b} for expectations of Weyl quantization integrals over Gaussian states.

\section{CONCLUSION}\label{sec:conc}

We have applied the transverse Hamiltonian variational method \cite{V_2015} to the mean square optimal CQF problem for a linear observer cascaded with a linear quantum plant. It has been shown that if such an observer is a stationary point of the problem among linear observers, then it also satisfies the first-order  necessary conditions of optimality with respect to a wider class of Weyl variations of the energy operators of the observer. In this sense,  linear observers are locally sufficient  for linear quantum plants as far as the mean square performance criteria are concerned.
Similar ideas can be developed for the CQLQG control problem \cite{NJP_2009} and an optimal control theory for classical port-Hamiltonian systems \cite{V_2006} and their stochastic versions.

\end{document}